\pgfplotsset{compat=1.14}
\newcommand{\subparagraph}{}
\titlespacing\section{3pt}{6pt plus 4pt minus 2pt}{6pt plus 2pt minus 2pt}
\titlespacing\subsection{3pt}{4pt plus 4pt minus 2pt}{4pt plus 2pt minus 2pt}
\titlespacing\subsubsection{3pt}{3pt plus 4pt minus 2pt}{0pt plus 2pt minus 3pt}
\newtheorem{thm}{Theorem}
\newtheorem{lem}{Lemma}
\newtheorem{cor}{Corollary}
\newtheorem{exmp}{Example}
\theoremstyle{definition}
\newcommand{\CK}[0]{{\mathcal{K}}}
\newcommand{\CS}[0]{{\mathcal{S}}}
\newcommand{\CT}[0]{{\mathcal{T}}}
\newcommand{\CU}[0]{{\mathcal{U}}}
\newcommand{\CV}[0]{{\mathcal{V}}}
\begin{document}

\title{Non-Symmetric Coded Caching\\for Location-Dependent Content Delivery
\vspace{-5 mm}
\thanks{This work was supported by the Academy of Finland under grants no. 319059 (Coded Collaborative Caching for Wireless Energy Efficiency) and 318927 (6Genesis Flagship).}}
\date{Jan 2021}

\author{
\IEEEauthorblockN{
Hamidreza Bakhshzad Mahmoodi, MohammadJavad Salehi and Antti T\"olli
}
\\
\vspace{-5 mm}
\IEEEauthorblockA{
Center for Wireless Communications, University of Oulu, 90570 Finland}
\IEEEauthorblockA{
Email: \{fist\_name.last\_name\}@oulu.fi}
}

\maketitle

\begin{abstract}
Immersive viewing is emerging as the next interface evolution for human-computer interaction. A truly wireless immersive application necessitates immense data delivery with ultra-low latency, raising stringent requirements for next-generation wireless networks.
A potential solution for addressing these requirements is through the efficient usage of in-device storage and computation capabilities. This paper proposes a novel location-based coded cache placement and delivery scheme, which leverages the nested code modulation (NCM) to enable multi-rate multicasting transmission. To provide a uniform quality of experience in different network locations, we formulate a linear programming cache allocation problem. Next, based on the users' spatial realizations, we adopt an NCM based coded delivery algorithm to efficiently serve a distinct group of users during each transmission. Numerical results demonstrate that the proposed location-based delivery method significantly increases transmission efficiency compared to state of the art.
\end{abstract}

\begin{IEEEkeywords}
Coded Caching; Location-Dependent Caching; Immersive Viewing
\end{IEEEkeywords}

\section{Introduction}
The current dominant interface for human-computer interaction is through mobile flat-screen devices such as smartphones and tablets. The next interface evolution is expected to bring forward wireless immersive viewing experiences facilitated by more capable wearable gadgets submerging users into the three-dimensional~(3D) digital world. However, such an evolution requires powerful and agile external radio connections. It imposes extremely stringent key performance indicators (KPIs) that are simply beyond what is possible with the current networking standards~\cite{han2017problem}. This necessitates new techniques that leverage recent advances in communication, storage, computing, big data analysis and machine learning~\cite{bastug2017toward}.

In this regard, using cheap on-device storage for improving bandwidth efficiency is considered a promising technique~\cite{bastug2014living}. In~\cite{sun2019communications}, it is shown that utilizing caching and computing capabilities of mobile VR devices effectively alleviates the traffic burden over the wireless network.
In~\cite{maddah2014fundamental}, a new caching technique known as \emph{Coded Caching (CC)} is introduced. In CC, well-defined fragments of all the files in the library are stored in the cache memories, creating a cache-aided multicasting gain during the delivery phase and resulting in a \emph{global caching gain}.
%
%
The original CC scheme in~\cite{maddah2014fundamental} is extended to multi-server networks in~\cite{shariatpanahi2016multi}, and later to wireless multi-antenna systems in~\cite{shariatpanahi2018multi,tolli2017multi}. A device-to-device (D2D) extension of multi-antenna coded caching is also investigated in~\cite{mahmoodi2020d2d}. Meantime, various practical limitations of coded caching have been addressed by the research community. Most notably, the large subpacketization requirement, defined as the number of smaller parts each file should be split into, is addressed in~\cite{lampiris2018adding,salehi2020lowcomplexity}, while the effect of the subpacketization on the low-SNR rate is investigated in~\cite{salehi2019subpacketization}.

A less-studied problem of coded caching schemes, affecting content delivery applications in general and immersive viewing applications in particular, is the \textit{near-far} issue. Due to the underlying multicasting nature of coded caching schemes, the achievable rate in any multicast message is limited to the rate of the user with the worst channel conditions. In~\cite{destounis2020adaptive}, a congestion control technique is proposed to avoid serving users in adverse fading conditions, while in~\cite{salehi2020coded} multiple descriptor codes~(MDC) are utilized to serve ill-conditioned users with a lower quality-of-experience. Unlike~\cite{destounis2020adaptive,salehi2020coded}, which are based on traditional XOR-ing of data elements, in~\cite{tang2017coded} nested code modulation (NCM) is proposed to allow building codewords that serve every user in the multicasting group with a different rate. This multi-rate property is achieved by altering the modulation constellation using side information on the file library and other users' requests.
A similar multi-rate transmission scheme can also be found in~\cite{chen2010novel}. However, the aforementioned approaches are not suitable for dynamic real-time applications, where users frequently move inside the network, and their achievable rate changes accordingly.

This paper introduces a new location-dependent coded caching scheme for efficient content delivery in wireless access networks. We consider a wireless communication scenario in which the users are free to move, and their requested content depends on their current location. Furthermore, the requested content at each location is assumed to be of the same size. As a specific use-case, we assume a multi-user immersive viewing environment where a group of users is submerged into a network-based immersive application that runs on a high-end eye-wear. Such a use-case necessitates heavy multimedia traffic and guaranteed user quality of experience (QoE) throughout the operating environment. In this regard, a location-dependent, uneven memory allocation is carried out based on the attainable data rate at each given location. Moreover, a novel multicast transmission scheme with an underlying NCM structure to support different data rates is devised to deliver the missing user-specific content within the same multicast transmission. Due to the optimized location-dependent cache placement, the worst-case delivery time is minimized across all the locations.

\section{System Model}
\label{section:system_model}
We envision a bounded environment (game hall, operating theatre, etc.) in which a single-antenna server serves $K$ single-antenna users through wireless communication links. The set of users is denoted by $\CK = \{ 1, ..., K \}$. The users are equipped with finite-size cache memories and are free to move throughout the environment. At each time slot, every user requests data from the server based on the application needs and its location. The requested data content can be divided into static and dynamic parts where the former can be proactively stored in the user cache memories. This paper focuses on the wireless delivery of the static location-dependent content partially aided by in-device cache memories.\footnote{We assume that a portion of the achievable data rate available at each user is dedicated to deliver the dynamic content without cache assistance.} A real-world application of this communication setup is a wireless immersive digital experience environment where the requested data is needed for reconstructing the location-dependent 3D field-of-view (FoV) at each user. The goal is to design a cache-aided communication scheme that minimizes the maximum required delivery time to transmit all the requested data to the served users. In other words, the aim is to provide a uniform QoE, irrespective of the users' location.

Intuitively, a larger share of the total cache memory should be reserved for storing data needed in locations where the communication quality is poor. We split the environment into $S$ regions, such that all points in a given region have almost the same distance from the server (i.e., can be served approximately with the same data rate). In the following, we refer to these regions as states and denote the set of states as $\CS$. A graphical example of an application environment with its states is provided in Figure~\ref{fig:system-model}. The file required for reconstructing the FoV of state $j \in \CS$ is denoted by $W(j)$. We assume for every region $j\in\CS$, the size of $W(j)$ is $F$ bits, and every user is equipped with a cache memory of size $MF$ bits. For the sake of simplicity, we consider a normalized data unit and drop $F$ in subsequent notations. Moreover, we consider the delivery procedure in a specific time slot and ignore the time index (the same procedure is repeated every time slot).

We assume a wideband communication scheme, where the total bandwidth is divided into several small frequency bins. 
For simplicity, we assume that the attainable expected rate at a particular location roughly depends on the transmitted power and the distance between the location and the server. Thus, the expected data rate attained in state $j\in \CS$ is expressed as
    $\bar{r}(j) = \log(1+\frac{P d^{-n}(j)}{N_0})$,
where $P$ is the transmission power, $N_0$ is the additive white Gaussian noise power, $n$ is the path-loss component, and $d(j)$ is the maximum distance from the server to all the points in state $j$.\footnote{
We need an estimation of the achievable rate at different states to perform the location-dependent cache placement. However, during the delivery phase, the communication can be performed with real achievable rates calculated using the available channel state information (CSI). In this paper, for notational and analysis simplicity, we have assumed the proposed rate estimation in the placement phase, based on the simple path loss model and line of sight (LOS) communications, is also valid throughout the delivery phase. Relaxing this assumption is left for the extended version of the paper.
} Thus, the expected data rate over all the frequency bins is approximated as $\hat{r}(j) \sim B\bar{r}(j)$, where $B$ is the communication bandwidth. For ease of exposition, we consider normalized data rate, i.e., $r(j) = \frac{\hat{r}(j)}{F}$, throughout this paper.


\begin{figure}
    \centering 
    \includegraphics[width=0.8\columnwidth,keepaspectratio]{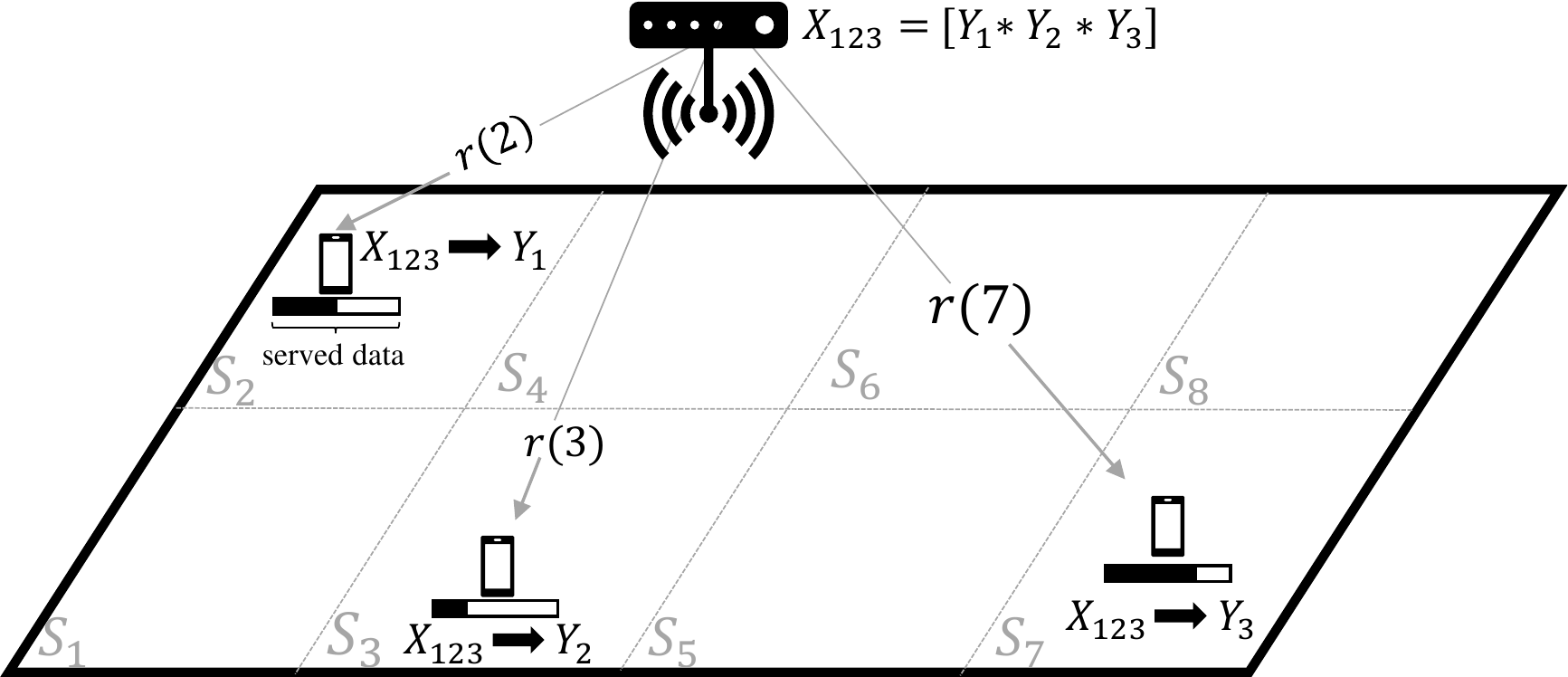}
    \caption{An application environment with $K=3$ users, split into $S=8$ states. $r(j)$ is the state-specific achievable rate and $r(3) > r(2) > r(7)$. $X_{123}$ is the multicast message, and $Y_i$ represents the data part intended for user $i$. The black bar below each user indicates how much of the requested data is cached, and $*$ denotes the variable-rate NCM operation.}
    \label{fig:system-model}
\end{figure}

\section{Location-Dependent Coded Caching}
\label{section:the_new_scheme}
Similar to other centralized coded caching schemes, our new location-dependent scheme also works in two distinct phases, 1) cache placement and 2) content delivery. 

\subsection{Cache Placement}
During the placement phase, the users' cache memories are filled up with useful data to minimize the duration of the content delivery phase. However, the content placement phase is carried out without any prior knowledge of users' spatial distribution in the delivery phase. As defined in Section~\ref{section:system_model}, we assume the application environment is split into $S$ states, and a location-dependent multimedia file of size one normalized data unit is required to reconstruct the static FoV of each state.

Before proceeding with the data placement, we first use a \textit{memory allocation} process to determine the dedicated amount of cache memory for storing (parts of) $W(j)$ at each user. Since there is no prior knowledge about the users' spatial realizations in the delivery phase, we minimize the maximum delivery time for a single user assuming uniform access probability for all the states. Let us use $m(j)$ to denote the normalized cache size at each user allocated to storing (parts of) $W(j)$. Since the size of $W(j)$ is normalized to one data unit, a user in state $j$ needs to receive $1-m(j)$ data units over the wireless link to reconstruct the FoV of state $j$. As a result, data delivery to state $j$ needs $T(j) = \frac{1-m(j)}{r(j)}$ seconds. Hence, the memory allocation for minimizing the maximum delivery time is found by solving the following linear program (LP):\footnote{Note that the achievable delivery time $\gamma$ can be controlled through $F$ and $B$ to meet the application's real-time requirements. In this regard, the solution of~\eqref{cache-allocation} can determine the maximum achievable QoE of different states (i.e., the maximum value of $F$ such that the constraints are met), using bisection over $F$ (and/or $B$). }
\begin{equation}
\begin{aligned}
\label{cache-allocation}
 \text{LP} \ : \quad &\min_{m(j), \gamma \ge 0}  \gamma\\
&\textrm{s.t.} \quad  \frac{1-m(j)}{r(j)} \leq \gamma, \ \forall j \in \CS, \\
& \sum_{j \in \CS} m(j) = M.
\end{aligned}
\end{equation}
%

After the memory allocation process, we store data in the cache memories of the users following the same method proposed in~\cite{maddah2014fundamental}. In this regard, for every $j\in\CS$ we split $W(j)$ into~$\binom{K}{t(j)}$ sub-files denoted by $W_{\CV(j)}(j)$, where $t(j) = K m(j)$ and $\CV(j)$ can be any subset of the user set $\CK$ with $|\CV(j)| = t(j)$.\footnote{Here we assume for every $j\in\CS$, $m(j)>0$ and $t(j)$ is an integer. In the next sections, it is briefly discussed what happens if these constraints are not met. However, a detailed discussion is left for the extended version.} Then, at the cache memory of user $i \in \CK$, we store $W_{\CV(j)}(j)$ for every state $j \in \CS$ and set $\CV(j) \ni i$. The cache placement procedure is outlined in Algorithm~\ref{Alg:placement}. 

\begin{exmp}
\label{exmp:placement}
Consider an application with $K=4$ users, where the environment is split into $S=5$ states and for each state, the required data size is $F=400$ Megabytes. Each user has a cache size of $900$ Megabytes, and hence, the normalized cache size is $M = 2.25$ data units. The spatial distribution of the achievable rate (assuming $B=F$) and its resulting memory allocation are as shown in Figure~\ref{fig:rate&cache_distribution}. It can be easily verified that $t(1) = t(5) = 1$, $t(2) = t(4) = 2$, and $t(3)=3$. As a result, $W(1)$, $W(3)$ and $W(5)$ should be split into four sub-files, while $W(2)$ and $W(4)$ are split into six sub-files. The resulting cache placement is visualized in Figure~\ref{fig:cache pool}.
\begin{figure}[t]
    \centering
    \resizebox{0.6\columnwidth}{!}{%

    \begin{tikzpicture}

    \begin{axis}
    [
    axis lines = left,
    xlabel near ticks,
    xlabel = \smaller {States},
    ylabel = \smaller {Rate [bits/Hz/s] \ref{rate}},
    ylabel near ticks,
    ymin = 0, 
    legend pos = north west,
    ticklabel style={font=\smaller},
    grid=both,
    major grid style={line width=.2pt,draw=gray!30},
    ]
    
    \addplot
    [mark = square, black]
    table[y=rates,x=States]{Figures/Data/Fig23.tex};
    \label{rate}
    \end{axis}
    
    \begin{axis}
    [
    axis lines = right,
    axis x line*=bottom,
    xtick={},
    xticklabels={},
    axis y line*=right,
    ylabel = \smaller {Memory \%  \ref{memory}},
    ylabel near ticks,
    ymin = 0, ymax = 100,
    ticklabel style={font=\smaller},
    major grid style={line width=.2pt,draw=gray!30},
    ]
    
    \addplot
    [mark = x, gray]
    table[y=Memory,x=States]{Figures/Data/Fig23.tex};
    \label{memory}
    \end{axis}

    \end{tikzpicture}
    }

    \caption{Location-specific rate and memory distributions for Example~\ref{exmp:placement}.
    }
    \label{fig:rate&cache_distribution}
\end{figure}
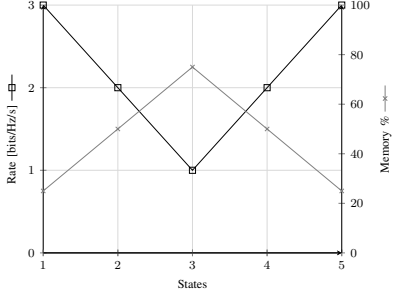
\begin{figure}
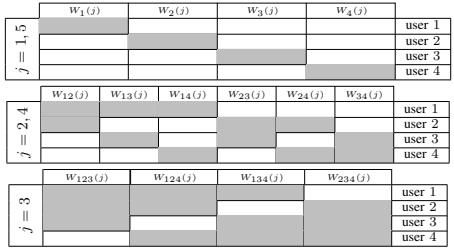

    \smaller
    \centering
    \resizebox{0.7\columnwidth}{!}{%
    \begin{tabular}{c}
        \begin{tabular}{|c|c|c|c|c|c|c|c|c|c|c|c|c|c|}
            \cline{2-13}
            \multicolumn{1}{c|}{} & \multicolumn{3}{c|}{\tiny$\,\qquad W_1(j) \qquad$} & \multicolumn{3}{c|}{\tiny$\,\qquad W_2(j) \qquad$} & \multicolumn{3}{c|}{\tiny$\,\qquad W_3(j) \qquad$} & \multicolumn{3}{c|}{\tiny$\,\qquad W_4(j) \qquad$} \\
            \hline
            \multirow{4}{*}{\rotatebox[origin=c]{90}{\small $j=1,5$}} & \multicolumn{3}{c|}{\cellcolor{gray!50}} & 
            \multicolumn{3}{c|}{} & 
            \multicolumn{3}{c|}{} & 
            \multicolumn{3}{c|}{} & user $1$ \\
            \cline{2-14}
            & \multicolumn{3}{c|}{} & 
            \multicolumn{3}{c|}{\cellcolor{gray!50}} & 
            \multicolumn{3}{c|}{} & 
            \multicolumn{3}{c|}{} & user $2$ \\
            \cline{2-14}
            & \multicolumn{3}{c|}{} & 
            \multicolumn{3}{c|}{} & 
            \multicolumn{3}{c|}{\cellcolor{gray!50}} & 
            \multicolumn{3}{c|}{} & user $3$ \\
            \cline{2-14}
            & \multicolumn{3}{c|}{} & 
            \multicolumn{3}{c|}{} & 
            \multicolumn{3}{c|}{} & 
            \multicolumn{3}{c|}{\cellcolor{gray!50}} & user $4$ \\
            \hline
            \multicolumn{14}{c}{}\\[-0.7em]
        \end{tabular} \\
        \begin{tabular}{|c|c|c|c|c|c|c|c|c|c|c|c|c|c|}
            \cline{2-13}
            \multicolumn{1}{c|}{} & \multicolumn{2}{c|}{\tiny$W_{12}(j)$} & \multicolumn{2}{c|}{\tiny$W_{13}(j)$} & \multicolumn{2}{c|}{\tiny$W_{14}(j)$} & \multicolumn{2}{c|}{\tiny$W_{23}(j)$} & \multicolumn{2}{c|}{\tiny$W_{24}(j)$} & \multicolumn{2}{c|}{\tiny$W_{34}(j)$}  \\
            \hline
            \multirow{4}{*}{\rotatebox[origin=c]{90}{\small $j=2,4$}} & \multicolumn{2}{c|}{\cellcolor{gray!50}} & 
            \multicolumn{2}{c|}{\cellcolor{gray!50}} & 
            \multicolumn{2}{c|}{\cellcolor{gray!50}} & 
            \multicolumn{2}{c|}{} & \multicolumn{2}{c|}{} & \multicolumn{2}{c|}{} & user $1$ \\
            \cline{2-14}
            & \multicolumn{2}{c|}{\cellcolor{gray!50}} & 
            \multicolumn{2}{c|}{} & 
            \multicolumn{2}{c|}{} & 
            \multicolumn{2}{c|}{\cellcolor{gray!50}} & \multicolumn{2}{c|}{\cellcolor{gray!50}} & \multicolumn{2}{c|}{} & user $2$ \\
            \cline{2-14}
            & \multicolumn{2}{c|}{} & 
            \multicolumn{2}{c|}{\cellcolor{gray!50}} & 
            \multicolumn{2}{c|}{} & 
            \multicolumn{2}{c|}{\cellcolor{gray!50}} & \multicolumn{2}{c|}{} & \multicolumn{2}{c|}{\cellcolor{gray!50}} & user $3$ \\
            \cline{2-14}
            & \multicolumn{2}{c|}{} & 
            \multicolumn{2}{c|}{} & 
            \multicolumn{2}{c|}{\cellcolor{gray!50}} & 
            \multicolumn{2}{c|}{} & \multicolumn{2}{c|}{\cellcolor{gray!50}} & \multicolumn{2}{c|}{\cellcolor{gray!50}} & user $4$ \\
            \hline
            \multicolumn{14}{c}{}\\[-0.7em]
        \end{tabular}  \\
        \begin{tabular}{|c|c|c|c|c|c|c|c|c|c|c|c|c|c|}
            \cline{2-13}
            \multicolumn{1}{c|}{} & \multicolumn{3}{c|}{\tiny$\;\quad W_{123}(j) \,\quad$} & \multicolumn{3}{|c}{\tiny$\;\quad W_{124}(j) \,\quad$} & \multicolumn{3}{|c}{\tiny$\;\quad W_{134}(j) \,\quad$} & \multicolumn{3}{|c|}{\tiny$\;\quad W_{234}(j) \,\quad$} \\
            \hline
            \multirow{4}{*}{\rotatebox[origin=c]{90}{\small $j=3$}} & \multicolumn{3}{c|}{\cellcolor{gray!50}} & 
            \multicolumn{3}{c|}{\cellcolor{gray!50}} & 
            \multicolumn{3}{c|}{\cellcolor{gray!50}} & 
            \multicolumn{3}{c|}{} & user $1$ \\
            \cline{2-14}
            & \multicolumn{3}{c|}{\cellcolor{gray!50}} & 
            \multicolumn{3}{c|}{\cellcolor{gray!50}} & 
            \multicolumn{3}{c|}{} & 
            \multicolumn{3}{c|}{\cellcolor{gray!50}} & user $2$ \\
            \cline{2-14}
            & \multicolumn{3}{c|}{\cellcolor{gray!50}} & 
            \multicolumn{3}{c|}{} & 
            \multicolumn{3}{c|}{\cellcolor{gray!50}} & 
            \multicolumn{3}{c|}{\cellcolor{gray!50}} & user $3$ \\
            \cline{2-14}
            & \multicolumn{3}{c|}{} & 
            \multicolumn{3}{c|}{\cellcolor{gray!50}} & 
            \multicolumn{3}{c|}{\cellcolor{gray!50}} & 
            \multicolumn{3}{c|}{\cellcolor{gray!50}} & user $4$ \\
            \hline
        \end{tabular}
    \end{tabular}%
    }
    \caption{Cache placement visualization for Example~\ref{exmp:placement}.}
    \label{fig:cache pool}
\end{figure}

\end{exmp}

As $W(j)$ is split into $\binom{K}{t(j)}$ sub-files and $\binom{K-1}{t(j)-1}$ sub-files are stored in the cache memory of each user, the total memory size dedicated to $W(j)$ at each user is
\begin{equation}
    \frac{\binom{K-1}{t(j)-1}}{\binom{K}{t(j)}} = \frac{t(j)}{K} = m(j) \; ,
\end{equation} 
and hence, the proposed algorithm satisfies the cache size constraints.
Note that, in comparison with~\cite{maddah2014fundamental}, here the required files in each state are considered as a separate library, and the cache placement algorithm in~\cite{maddah2014fundamental} is performed for each state independent of the others. Also, different from the existing works, here, files of different locations have distinct $t(j)$ values, which should be carefully considered in the delivery phase. For notational simplicity, throughout the paper we ignore the brackets and separators while explicitly writing $\CV(j)$, i.e., $W_{i \, k}(j) \equiv W_{\{i,k\}}(j)$.

\begin{algorithm}[t]
\small
	\caption{Location-based cache placement}
	\begin{algorithmic}[1]
		\Procedure{CACHE\_PLACEMENT}{}
		\State $\{m(j) \} =$ The result of the LP problem in~\eqref{cache-allocation}
		
		\ForAll{$j \in \CS$} 
		
		\State $t(j) = K \times m(j)$ 
		
		\State $W(j) \rightarrow \{W_{\CV(j)}(j) \; | \; \CV(j) \subseteq \CK, |\CV(j)| = t(j)\}$
		
		\ForAll{$\CV(j)$}
		\ForAll{$i \in \CK$}
		\If{$i \in \CV(j)$}
		    \State Put $W_{\CV(j)}(j)$ in the cache of user $i$
		\EndIf
		\EndFor
		\EndFor
		\EndFor
		\EndProcedure 
	\end{algorithmic}
	\label{Alg:placement}
\end{algorithm}

\subsection{Content Delivery}
At the beginning of the delivery phase, every user $i \in \CK$ reveals its requested file $W_i \equiv W(s_i).$\footnote{Note that we have used $W(j)$ to represent the file required for reconstructing the FoV of the state $j$, and $W_i$ to denote the file requested by user~$i$. The same convention is used for all notations in the text.} Note that $W_i$ depends on the state $s_i$ where user $i$ is located. The server then builds and transmits several \textit{nested} codewords, such that after receiving the codewords, all users can reconstruct their requested files. From the system model, user $i$ requires a total amount of one normalized data unit to reconstruct $W_i$. However, a subset of this data, with size $m_i \equiv m(s_i)$ data units, is available in the cache of user $i$. Since each user might be in a different state with distinct $t(j)$, the conventional delivery scheme of~\cite{maddah2014fundamental} is no longer applicable, and a new delivery mechanism is required to achieve a proper multicasting gain.

\begin{algorithm}[t]
\small
    \caption{NCM-based Content Delivery}
	\begin{algorithmic}[1]
		\Procedure{DELIVERY}{}
		
		\State $\hat{t} = \min_{i \in \CK} t_i$
		\ForAll{$\CU \subseteq \CK : |\CU| = \hat{t}+1$}
		    \State $X_{\CU} \gets 0$
		    \ForAll{$i \in \CU$}
		        \State $\alpha_i \gets \binom{t_i}{\hat{t}}$, $Y_{\CU,i} \gets 0$, $\CU_{-i} \gets \CU \backslash \{i \}$
		        \ForAll{$\CV_i \subseteq \CK : |\CV_i| = t_i+1$}
		            \If{$\CU_{-i} \subseteq \CV_i$, $i \not\in \CV_i$}
		                \State $W_{\CV_i,i}^q \gets$ \textsc{Chunk}($W_{\CV_i,i}, \alpha_i$)
		                \State $Y_{\CU,i} \gets$ \textsc{Concat} $ (Y_{\CU,i}, W_{\CV_i,i}^q)$ 
		            \EndIf
		        \EndFor
		        \State $X_{\CU} \gets$ \textsc{Nest} $(X_{\CU}, Y_{\CU,i}, r_i)$
		    \EndFor
		    \State Transmit $X_{\CU}$
		\EndFor
		\EndProcedure
	\end{algorithmic}
	\label{Alg:Delivery}
\end{algorithm}

The new delivery algorithm for the proposed setup is outlined in Algorithm~\ref{Alg:Delivery}. During the delivery phase, the server transmits a nested codeword $X_{\CU}$ for every subset of users $\CU$ with $|\CU|= \hat{t} +1$, where $\hat{t}$ is the \textit{common cache ratio} defined as $\hat{t} = \min_{i \in \CK} t_i$, and $t_i \equiv t(s_i)$. From the placement phase, we recall that the file $W_i$ intended for user $i \in \CU$ is split into sub-files $W_{\CV(s_i),i}$.
Let us define $\CU_{-i} \equiv \CU \setminus \{i\}$ and consider a user $k \in \CU_{-i}$. All the sub-files $W_{\CV(s_i),i}$ for which $k \in \CV(s_i)$ are already available in the cache memory of user $k$. Thus, by transmitting (part of) every sub-file $W_{\CV(s_i),i}$ for which $\CU_{-i} \subseteq \CV(s_i)$ and $i \not\in \CV(s_i)$, a portion of $W_i$ is delivered to user $i$ without causing any interference at other users $k \in \CU_{-i}$.

\begin{exmp}
\label{exmp:interference-free-delivery}
Consider the network in Example~\ref{exmp:placement}, for which the cache placement is visualized in Figure~\ref{fig:cache pool}. Assume in a specific time slot, $s_1 = 1$, $s_2 =2$, $s_3 =4$, $s_4 = 5$. Denoting the set of requested sub-files for user $i$ with $\CT_i$ and assuming $A \equiv W(1)$, $B \equiv W(2)$, $C \equiv W(4)$, $D \equiv W(5)$, we have
\begin{equation}
    \begin{aligned}
    \CT_1 &= \{ A_{2}, A_{3}, A_{4} \}, \qquad
    \CT_2 = \{ B_{13}, B_{14}, B_{34} \}, \\
    \CT_3 &= \{ C_{12}, C_{14}, C_{24} \}, \quad
    \CT_4 = \{ D_{1}, D_{2}, D_{3} \}. \\
    \end{aligned}
\end{equation}
Note that, the size of the sub-files of $A,B,C,D$ are $\frac{1}{4}, \frac{1}{6}, \frac{1}{6}, \frac{1}{4}$ data units, respectively. The common cache ratio is $\hat{t} = 1$, and hence, during each transmission we deliver data to $\hat{t}+1 = 2$ users.
Let us consider the case $\CU = \{1,2\}$. The codeword $X_{12}$ is supposed to deliver a portion of the requested data to users $1$ and $2$ interference-free. This is done by including (parts of) $A_{2}$, $B_{13}$, $B_{14}$ in $X_{12}$. As can be seen from Figure~\ref{fig:cache pool}, user $1$ has $B_{13}$ and $B_{14}$ in its cache memory, and so it can remove them from its received data and decode (part of) $A_{2} \in \CT_1$ interference-free. Similarly, user $2$ can remove $A_{2}$ using its cache contents and decode (parts of) $B_{13},B_{14} \in \CT_2$ interference-free. The coding procedure is explained in more details in Example~\ref{exmp:coding-procedure}.
\end{exmp}

As illustrated in Example~\ref{exmp:interference-free-delivery}, 
due to the proposed location-dependent cache placement, the value of $t_i$ (and hence, the length of the transmitted sub-files) might be different for various users. Moreover, we may find more than one sub-file to be transmitted to user $i \in \CU$. We address these issues by introducing a normalizing factor $\alpha_i = \binom{t_i}{\hat{t}}$ and two auxiliary functions $\textsc{Chunk}$ and $\textsc{Concat}$. Whenever $t_i > \hat{t}$ for user $i \in \CU$, we first split every sub-file intended for user $i$ into $\alpha_i$ smaller \textit{chunks} (denoted by $W_{\CV(s_i),i}^q$ in Algorithm~\ref{Alg:Delivery}). Then, we \textit{concatenate} a selection of these chunks to create the part of $X_{\CU}$ intended for user $i$, represented by $Y_{\CU,i}$. The function $\textsc{Chunk}$ ensures none of the chunks of a sub-file is sent twice, and the function $\textsc{Concat}$ creates a bit-wise concatenation of the given chunks. The final codeword $X_{\CU}$ is created by nesting $Y_{\CU,i}$ for every user $i \in \CU$. This is shown by the auxiliary function $\textsc{Nest}$ in Algorithm~\ref{Alg:Delivery}. Note that due to the nesting, every $Y_{\CU,i}$ can be transmitted with rate $r_i \equiv r(s_i)$ (c.f.~\cite{tang2011full}). 

\begin{exmp}
\label{exmp:coding-procedure}
Following Example~\ref{exmp:interference-free-delivery}, let us review how the codeword $X_{12}$ is built. This codeword includes (parts of) $A_{2}$, $B_{13}$, $B_{14}$, intended for users $1$ and $2$. However, the size of $A_2$ is $\frac{1}{4}$, while $B_{13},B_{14}$ are both $\frac{1}{6}$ data units. To solve this issue, we use normalizing factors $\alpha_1 = 1$ and $\alpha_2 = 2$, and build
\begin{equation}
    X_{12} = A_2 * \textsc{Concat} (B_{13}^{1}, B_{14}^{1}) \; ,
\end{equation}
where the operator ($*$) denotes the nesting operation and superscripts are used to differentiate various chunks of a sub-file. The nesting operation in $X_{12}$ is performed such that $A_2$ and $\textsc{Concat} (B_{13}^{1}, B_{14}^{1})$ are delivered with rates $r_1 = 3$ and $r_2 = 2$ data units per second, respectively. As a result, this codeword is delivered in $\max(\frac{1}{3} \times \frac{1}{4}, \frac{1}{2} \times \frac{2}{12}) = \frac{1}{12}$ seconds. 
Similarly, as $\alpha_3 = 2$ and $\alpha_4 = 1$, we have
\begin{equation}
    \begin{aligned}
    X_{13} &= A_3 * \textsc{Concat}(C_{12}^{1},C_{14}^{1}) \; , \qquad X_{14} = A_4 * D_1 \; , \\
    X_{23} &= \textsc{Concat}(B_{13}^{2} \; B_{34}^{1}) * \textsc{Concat}(C_{12}^{2} \; C_{24}^{1}) \; , \\
    X_{24} &= \textsc{Concat}(B_{14}^{2} \; B_{34}^{2}) * D_2 \; , \\
    X_{34} &= \textsc{Concat}(C_{14}^{4} \; C_{24}^{4}) * D_3 \; , \\
    \end{aligned}
\end{equation}
where each transmission requires $\frac{1}{12}$ seconds. Thus, the total delivery time is $\frac{6}{12}$ seconds. In comparison, it can be seen that the delivery time would be doubled without multicasting.
\end{exmp}

\begin{thm}
Using the proposed cache placement and content delivery algorithms, every user receives its requested data.
\end{thm}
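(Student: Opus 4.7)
The plan is to verify three things: (i) every sub-file that user $i$ lacks appears, chunk by chunk, across the transmitted codewords; (ii) user $i$ can cancel all other users' content from each codeword $X_{\CU}$ with $i\in\CU$; and (iii) the NCM nesting lets user $i$ decode its layer at the supported rate $r_i$. Fix a user $i\in\CK$. From the placement algorithm, user $i$ already stores every $W_{\CV(s_i)}(s_i)$ with $i\in\CV(s_i)$, so the only remaining task is to show that the $\binom{K-1}{t_i}$ sub-files $W_{\CV_i,i}$ with $i\notin\CV_i$ and $|\CV_i|=t_i$ are fully delivered to user $i$ by Algorithm~\ref{Alg:Delivery}.

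First I would carry out the coverage count. For a fixed missing sub-file $W_{\CV_i,i}$, the codewords in which user $i$ is served by (a chunk of) this sub-file are exactly the $X_{\CU}$ with $|\CU|=\hat{t}+1$, $i\in\CU$, and $\CU_{-i}\subseteq\CV_i$. Since $|\CU_{-i}|=\hat{t}\le t_i=|\CV_i|$, the number of admissible $\CU_{-i}$ equals $\binom{t_i}{\hat{t}}=\alpha_i$. This is precisely the number of chunks that $\textsc{Chunk}$ produces from $W_{\CV_i,i}$, so if the chunks are indexed bijectively by the admissible $\CU_{-i}$ (the natural indexing implicit in the enumeration order inside Algorithm~\ref{Alg:Delivery}), the concatenation of all chunks transmitted across the various $X_{\CU}$ reconstructs $W_{\CV_i,i}$ with no duplication and no omission.

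Next I would verify interference-free decodability. Inside $X_{\CU}$, the part destined for user $i$ is $Y_{\CU,i}$, a concatenation of chunks of sub-files $W_{\CV_i,i}$ with $\CU_{-i}\subseteq\CV_i$ and $i\notin\CV_i$. For any other user $k\in\CU_{-i}$, the inclusion $\CU_{-i}\subseteq\CV_i$ forces $k\in\CV_i$, so $W_{\CV_i,i}$ — and in particular every chunk of it that appears in $Y_{\CU,i}$ — is already in the cache of user $k$, and can be subtracted from the received signal. Hence after cache-aided cancellation, user $k$ is left with its own layer $Y_{\CU,k}$. Combining this with the NCM property recalled from \cite{tang2011full,tang2017coded}, each $Y_{\CU,i}$ is embedded in $X_{\CU}$ at a rate $r_i=r(s_i)$ that user $i$ can sustain, so every user in $\CU$ decodes its own layer successfully.

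The main obstacle I expect is the bookkeeping in the first step: making sure that the $\alpha_i$ chunks extracted from $W_{\CV_i,i}$ across different codewords form a partition of the sub-file, i.e., that $\textsc{Chunk}$ uses a shared indexing rule consistent with the enumeration of $\CU_{-i}\subseteq\CV_i$. Once this bijection is pinned down, the coverage count, the interference cancellation via the shared index $\CV_i$, and the NCM rate argument combine to show that each user reassembles all of its missing sub-files and hence its requested file $W_i$.
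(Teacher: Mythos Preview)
Your proposal is correct. It differs from the paper's own proof mainly in granularity: the paper argues by a pure volume count, showing that across the $\binom{K-1}{\hat{t}}$ codewords containing user~$i$, the total number of delivered bits equals $1-\tfrac{t_i}{K}$ via the identity
\[
\frac{\binom{K-1}{\hat{t}}\binom{K-\hat{t}-1}{t_i-\hat{t}}}{\binom{K}{t_i}\binom{t_i}{\hat{t}}}=\frac{K-t_i}{K},
\]
and implicitly relies on \textsc{Chunk} never repeating a chunk. You instead fix a single missing sub-file $W_{\CV_i,i}$ and set up the bijection between its $\alpha_i=\binom{t_i}{\hat{t}}$ chunks and the admissible sets $\CU_{-i}\subseteq\CV_i$, which makes the no-duplication/no-omission property explicit; you also spell out the cache-aided cancellation and the NCM rate argument, neither of which the paper's proof mentions. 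Your route is slightly longer but more self-contained; the paper's is a two-line binomial identity that leans on the stated behavior of \textsc{Chunk}.
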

\begin{proof}
A user $i \in \CK$ needs a total amount of one (normalized) data unit to reconstruct its requested file $W_i$. If the user is in state $s_i$, $m(s_i)$ data units are available in its cache memory, and hence it needs to receive $1-m(s_i) = 1 - \frac{t_i}{K}$ data units from the server. As detailed in the delivery algorithm, user $i$ receives parts of its requested data in $\binom{K-1}{\hat{t}}$ transmissions. Consider one such transmission, where the codeword $X_{\CU}$ is transmitted to the users in $\CU \ni i$. During this transmission, user $i$ receives $Y_{\CU,i}$, which is built through concatenation of $\binom{K-\hat{t}-1}{t_i - \hat{t}}$ chunks, each with size
    $\nicefrac{1}{\alpha_i \binom{K}{t_i}} = \nicefrac{1}{\binom{K}{t_i} \binom{t_i}{\hat{t}}}$
data units. As a result, the total data size delivered to user $i$ from the server is
\begin{equation}
    \frac{ \binom{K-1}{\hat{t}} \binom{K-\hat{t}-1}{t_i - \hat{t}} }{ \binom{K}{t_i} \binom{t_i}{\hat{t}} } = \frac{K-t_i}{K} = 1 - \frac{t_i}{K} \; ,
\end{equation}
and the proof is complete.
\end{proof}

\section{Performance Analysis}
\label{section:performance}
We discuss the performance of the proposed scheme assuming that for every state $j \in \CS$, $t(j)$ is a positive integer. Then we briefly review what happens if this assumption is relaxed.

\begin{figure*}[ht]
\begin{minipage}{0.32\linewidth}
    \centering
    \resizebox{\columnwidth}{!}{%

    \begin{tikzpicture}

    \begin{axis}
    [
    axis lines = left,
    xlabel near ticks,
    xlabel = \smaller {The number of users ($K$)},
    ylabel = \smaller {Delivery time [ms]},
    ylabel near ticks,
    ymin = 0, 
    legend pos = north west,
    ticklabel style={font=\smaller},
    grid=both,
    major grid style={line width=.2pt,draw=gray!30},
    ]
    
    \addplot
    [mark = square, black]
    table[y=Tu,x=K]{Figures/Data/Fig4.tex};
    \addlegendentry{\tiny $T_u$}
    
    \addplot
    [mark = x, black!60]
    table[y=Tx,x=K]{Figures/Data/Fig4.tex};
    \addlegendentry{\tiny $T_x$}
    
    \addplot
    [dashed, black!40]
    table[y=Tm,x=K]{Figures/Data/Fig4.tex};
    \addlegendentry{\tiny $T_m$}
    
    \end{axis}

    \end{tikzpicture}
    }

    \caption{Average delivery time vs the user count~($K$), $M = \frac{3}{4}S$.
    }
    \label{fig:timegain_Kx}
\label{fig:figure1}
\end{minipage}%
\hfill
\begin{minipage}{0.3\linewidth}
    \centering
    \resizebox{\columnwidth}{!}{%

    \begin{tikzpicture}

    \begin{axis}
    [
    axis lines = left,
    xlabel near ticks,
    xlabel = \smaller {M [Data Units]},
    ylabel = \smaller {Performance Ratio},
    ylabel near ticks,
    ymin = 0, 
    legend pos = north west,
    ticklabel style={font=\smaller},
    grid=both,
    major grid style={line width=.2pt,draw=gray!30},
    ]
    
    \addplot
    [mark = x, black]
    table[y=Tu-Tx-S121,x=M]{Figures/Data/Fig56.tex};
    \addlegendentry{\tiny $\frac{T_u}{T_x}$}
    
    \addplot
    [mark = square, black!60]
    table[y=Tu-Tm-S121,x=M]{Figures/Data/Fig56.tex};
    \addlegendentry{\tiny $\frac{T_u}{T_m}$}
    
    \end{axis}

    \end{tikzpicture}
    }

    \caption{Relative gain in delivery time vs the cache size~($M$), $K = 10$, $S = 121$.
    }
    \label{fig:time_Mxs121}
\label{fig:figure2}
\end{minipage}%
\hfill
\begin{minipage}{0.3\linewidth}
    \centering
    \resizebox{\columnwidth}{!}{%

    \begin{tikzpicture}

    \begin{axis}
    [
    axis lines = left,
    xlabel near ticks,
    xlabel = \smaller {M [Data Units]},
    ylabel = \smaller {Performance Ratio},
    ylabel near ticks,
    ymin = 0, 
    legend pos = north west,
    ticklabel style={font=\smaller},
    grid=both,
    major grid style={line width=.2pt,draw=gray!30},
    ]
    
    \addplot
    [mark = x, black]
    table[y=Tu-Tx-S441,x=M]{Figures/Data/Fig56.tex};
    \addlegendentry{\tiny $\frac{T_u}{T_x}$}
    
    \addplot
    [mark = square, black!60]
    table[y=Tu-Tm-S441,x=M]{Figures/Data/Fig56.tex};
    \addlegendentry{\tiny $\frac{T_u}{T_m}$}
    
    \end{axis}

    \end{tikzpicture}
    }

    \caption{Relative gain in delivery time vs the cache size~($M$), $K = 10$, $S = 441$.
    }
    \label{fig:timegain_Mx441}
\label{fig:figure3}
\end{minipage}
\vspace{-15pt}
\end{figure*}

\begin{lem} \label{lemma:eqality}
For every two states $j,j' \in \CS$, the result of the memory allocation problem in~\eqref{cache-allocation} satisfies
\begin{equation} \label{eq:optimalms}
    \frac{1-m(j)}{r(j)} = \frac{1-m(j')}{r(j')}\;, \quad  \frac{1-m(j)}{r(j)} = \frac{S-M}{\sum_{j' \in \CS} r(j')} \; .
\end{equation}
\end{lem}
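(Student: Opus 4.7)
The plan is to exploit the structure of the LP in~\eqref{cache-allocation}. Since $\gamma$ is being minimized subject to $\gamma \geq (1-m(j))/r(j)$ for every $j$, at optimality we must have $\gamma^{*} = \max_{j \in \CS} f(j)$, where $f(j) \equiv (1-m(j))/r(j)$. The heart of the proof is the claim that at the optimum every upper-bound constraint is active, i.e. $f(j) = \gamma^{*}$ for all $j \in \CS$. Once this is established, rearranging $1-m(j) = \gamma^{*} r(j)$ yields the first equality directly, and summing over $j \in \CS$ together with the memory constraint $\sum_{j} m(j) = M$ produces
\begin{equation*}
S - M \;=\; \gamma^{*} \sum_{j' \in \CS} r(j') \; ,
\end{equation*}
which immediately gives the second equality.

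To prove that every constraint is active, I would argue by contradiction using a memory-shifting perturbation. Let $(m^{*}, \gamma^{*})$ be an optimal solution and suppose there exists $j_{0} \in \CS$ with $f(j_{0}) < \gamma^{*}$. Let $\CT = \{ j \in \CS : f(j) = \gamma^{*} \}$ be the set of tight states (non-empty because $\gamma^{*}$ is the maximum). Using the standing assumption $m^{*}(j_{0}) > 0$ from Section~\ref{section:the_new_scheme}, pick a small $\delta > 0$ and define a perturbed allocation by decreasing $m^{*}(j_{0})$ by $\delta |\CT|$ and increasing $m^{*}(j)$ by $\delta$ for each $j \in \CT$. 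This perturbation preserves $\sum_{j} m(j) = M$ and, for $\delta$ small enough, preserves non-negativity. After the shift, $f(j_{0})$ grows by $\delta |\CT| / r(j_{0})$ but remains strictly below $\gamma^{*}$, while $f(j)$ for each $j \in \CT$ decreases by $\delta / r(j)$, falling strictly below $\gamma^{*}$. All other $f(j)$ are unchanged and already below $\gamma^{*}$. Hence the new objective $\max_{j} f(j) < \gamma^{*}$, contradicting optimality.

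The main obstacle, and what makes the perturbation nontrivial, is handling the possibility that $\CT$ contains several indices simultaneously: one has to decrease $m(j_{0})$ by enough to compensate for multiple reallocations while staying feasible. Choosing $\delta < \min\bigl(m^{*}(j_{0})/|\CT|,\; (\gamma^{*}-f(j_{0})) r(j_{0}) / |\CT| \bigr)$ resolves this cleanly. With the tightness established, the two equalities in~\eqref{eq:optimalms} drop out by the algebra above, completing the proof.
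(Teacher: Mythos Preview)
Your proof is correct and follows essentially the same approach as the paper: the paper states tersely that the first equality ``can be simply proved by contradiction'' and then derives the second by summing $1-m(j)=\gamma^{*} r(j)$ over $j\in\CS$ to obtain $S-M=\gamma^{*}\sum_{j'}r(j')$, exactly as you do. Your memory-shifting perturbation simply makes explicit the contradiction argument that the paper leaves to the reader.
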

\begin{proof}
The first equality can be simply proved by contradiction.
Using this equality, for the second one we can write
\begin{equation}
    \frac{1-m(j)}{r(j)} \sum_{j'\in\CS} r(j') = \sum_{j'\in\CS} 1-m(j') = S-M  \; ,
\end{equation}
and the proof is complete.
\vspace{-5pt}
\end{proof}

\begin{thm}\label{theorem:proposed transmission time}
Total delivery time of the proposed scheme, denoted by $T_m$, is calculated as
\begin{equation}\label{eq:proposed transmission time}
    T_m = \frac{K}{\hat{t} + 1} \frac{S-M}{\sum_{j \in \CS}r(j)} \; .
\end{equation}
\end{thm}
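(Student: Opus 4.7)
The strategy is to (i) compute the time required to transmit a single nested codeword $X_{\CU}$, (ii) show via Lemma~\ref{lemma:eqality} that this per-codeword time does not depend on $\CU$, and (iii) multiply by the total number of codewords $\binom{K}{\hat{t}+1}$ and simplify the binomial expression.

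First I would bound the airtime of a single transmission. Since the nesting operation packs the $|\CU|$ sub-streams into the same time resource, with stream $i\in\CU$ decoded at its own rate $r_i$, the duration needed to transmit $X_{\CU}$ is $\tau_{\CU}=\max_{i\in\CU}|Y_{\CU,i}|/r_i$. From Algorithm~\ref{Alg:Delivery} (and as already counted inside the proof of the preceding theorem), $Y_{\CU,i}$ is the concatenation of $\binom{K-\hat{t}-1}{t_i-\hat{t}}$ chunks, each of size $1/\bigl(\binom{K}{t_i}\binom{t_i}{\hat{t}}\bigr)$ data units. Using the identity $\binom{K}{t_i}\binom{t_i}{\hat{t}}=\binom{K}{\hat{t}}\binom{K-\hat{t}}{t_i-\hat{t}}$, this collapses to
\begin{equation}
|Y_{\CU,i}|=\frac{1}{\binom{K}{\hat{t}}}\cdot\frac{K-t_i}{K-\hat{t}}=\frac{K}{\binom{K}{\hat{t}}(K-\hat{t})}\,\bigl(1-m(s_i)\bigr).
\end{equation}

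The crucial step is then to invoke Lemma~\ref{lemma:eqality}: the memory allocation solving \eqref{cache-allocation} equalizes $(1-m(s_i))/r_i$ across all states, and in particular across all $i\in\CU$. Substituting the closed form $(1-m(s_i))/r_i=(S-M)/\sum_{j\in\CS}r(j)$ gives
\begin{equation}
\frac{|Y_{\CU,i}|}{r_i}=\frac{K}{\binom{K}{\hat{t}}(K-\hat{t})}\cdot\frac{S-M}{\sum_{j\in\CS}r(j)},
\end{equation}
which is independent of both $i$ and $\CU$. Hence the max collapses and every codeword is sent in the same amount of time $\tau$.

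Finally I would sum over the $\binom{K}{\hat{t}+1}$ codewords enumerated by the outer loop of Algorithm~\ref{Alg:Delivery}, yielding $T_m=\binom{K}{\hat{t}+1}\,\tau$, and use $\binom{K}{\hat{t}+1}=\binom{K}{\hat{t}}(K-\hat{t})/(\hat{t}+1)$ to cancel the binomial factors and obtain \eqref{eq:proposed transmission time}. The main obstacle I anticipate is not algebraic but conceptual: one must recognize that the NCM nesting would generally be throttled by the slowest sub-stream, and that the location-aware allocation~\eqref{cache-allocation} is precisely what equalizes these sub-stream durations, so that no rate is wasted in any multicast group. Once this equalization is established via Lemma~\ref{lemma:eqality}, the remainder is a binomial-coefficient simplification.
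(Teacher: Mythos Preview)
Your proposal is correct and follows essentially the same skeleton as the paper: compute the per-codeword transmission time, invoke Lemma~\ref{lemma:eqality} to show it is the same for every user in every multicast group, then multiply by $\binom{K}{\hat{t}+1}$ and simplify. The only difference is in how you obtain $|Y_{\CU,i}|$: you count chunks directly and collapse via the identity $\binom{K}{t_i}\binom{t_i}{\hat{t}}=\binom{K}{\hat{t}}\binom{K-\hat{t}}{t_i-\hat{t}}$, whereas the paper shortcuts this by noting that user $i$ receives $1-m(s_i)$ data units spread evenly over $\binom{K-1}{\hat{t}}$ transmissions, giving $T_i=\frac{1}{\binom{K-1}{\hat{t}}}\frac{1-m(s_i)}{r_i}$ immediately; your prefactor $K/\bigl(\binom{K}{\hat{t}}(K-\hat{t})\bigr)$ is exactly $1/\binom{K-1}{\hat{t}}$, so the two computations coincide.
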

\begin{proof}
A user $i \in \CK$ receives $1-m(s_i)$ data units from the server through $n = \binom{K-1}{\hat{t}}$ transmissions. However, as the same procedure is followed to build $Y_{\CU,i}$ for every $\CU \ni i$, the data size delivered to user $i$ is the same for all these transmissions. Moreover, this user is always served with the rate $r_i$, and hence, it needs $T_i = \frac{1}{n} \frac{1-m(s_i)}{r(i)}$ seconds to receive every data part. Using Lemma~\ref{lemma:eqality}, we can rewrite $T_i$ as 
    $T_i = \frac{1}{n} \frac{S-M}{\sum_{j \in \CS} r(j)}$,
which means the time required for all the transmissions is the same and independent of the user selection. As in the delivery phase there exist a total number of $\binom{K}{\hat{t}+1}$ transmissions, the total delivery time is
\begin{equation}
    T_m = \frac{\binom{K}{\hat{t}+1}}{\binom{K-1}{\hat{t}}} \frac{S-M}{\sum_{j \in \CS}r(j)} = \frac{K}{\hat{t} + 1}\frac{S-M}{\sum_{j \in \CS}r(j)} \; ,
\end{equation}
and the proof is complete.
\end{proof}

\begin{thm}
\label{thm:time_gain}
Compared with unicasting, the proposed scheme enables a reduction in the delivery time by a factor of $\hat{t}+1$.
\end{thm}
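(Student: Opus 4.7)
The plan is to compute the unicasting delivery time $T_u$ under the same location-dependent cache placement and then take its ratio with the expression for $T_m$ already derived in Theorem~\ref{theorem:proposed transmission time}. Using the same placement is the natural apples-to-apples baseline: it isolates the multicasting gain from any cache-allocation gain, and it ensures that user $i$ still has $m(s_i)$ data units of $W_i$ cached and must therefore receive the remaining $1 - m(s_i)$ data units over the wireless link.

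First, I would note that in a unicast scheme the server transmits to one user at a time, so user $i \in \CK$ consumes an airtime of $(1 - m(s_i))/r(s_i)$ seconds at its location-specific rate $r_i = r(s_i)$. Invoking Lemma~\ref{lemma:eqality}, every such per-user time collapses to the common value $(S-M)/\sum_{j \in \CS} r(j)$, independent of the user's state. Summing over all $K$ users yields
$$
T_u \;=\; \sum_{i \in \CK} \frac{1-m(s_i)}{r(s_i)} \;=\; K \cdot \frac{S-M}{\sum_{j \in \CS} r(j)}.
$$

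Finally, dividing by the expression for $T_m$ in~\eqref{eq:proposed transmission time} immediately gives $T_u / T_m = \hat{t} + 1$, which is the claimed reduction factor. The only subtle step is the use of Lemma~\ref{lemma:eqality} to show that all per-user unicast times coincide regardless of $s_i$; once that is in place, the rest is a one-line algebraic comparison with Theorem~\ref{theorem:proposed transmission time}. No new obstacles arise, because the delivery-time accounting for the proposed scheme has already absorbed the location-dependent subpacketization via the normalizing factors $\alpha_i$, and Theorem~\ref{theorem:proposed transmission time} has already shown that every multicast transmission takes the same duration.
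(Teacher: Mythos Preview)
Your proposal is correct and follows essentially the same approach as the paper: compute $T_u=\sum_{i\in\CK}(1-m(s_i))/r(s_i)$, apply Lemma~\ref{lemma:eqality} to collapse each summand to $(S-M)/\sum_{j\in\CS}r(j)$, and compare the resulting $T_u=K(S-M)/\sum_{j\in\CS}r(j)$ with $T_m$ from Theorem~\ref{theorem:proposed transmission time}. The paper's proof is identical in structure and in the lemmas it invokes.
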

\begin{proof}
In case of unicasting, a user $i$ gets $1-m(s_i)$ data units with the rate $r_i$, and hence, the total delivery time is equal to
    $T_u = \sum_{i \in \CK} \frac{1-m(s_i)}{r_i}$.
However, using Lemma~\ref{lemma:eqality} we have
\begin{equation}
    T_u = \sum_{i \in \CK} \frac{S-M}{\sum_{j\in\CS}r(j)} = K\frac{S-M}{\sum_{j\in\CS}r(j)} \; ,
\end{equation}
which is $\hat{t}+1$ times larger than the $T_m$ value in~\eqref{eq:proposed transmission time}.
\end{proof}

\begin{cor} 
The proposed scheme becomes more efficient if $\hat{t}$ is larger.
This happens, for example, when more users are located in states with poor channel conditions.
\end{cor}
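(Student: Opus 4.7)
The corollary splits naturally into two claims, and both follow by inspecting the closed-form rate/allocation expressions already proved. For the monotonicity in $\hat{t}$, the plan is simply to substitute into Theorem~\ref{theorem:proposed transmission time}: the delivery time factorises as $T_m = \frac{K}{\hat{t}+1} \cdot C$, where $C = \frac{S-M}{\sum_{j \in \CS} r(j)}$ depends only on the placement-phase quantities $M,S$ and the geometry $\{r(j)\}$, not on the particular user-to-state assignment realised in the delivery phase. Hence any deployment whose common cache ratio $\hat{t} = \min_{i \in \CK} t(s_i)$ is strictly larger yields a strictly smaller $T_m$, which gives the first sentence at once.

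For the second sentence I would use Lemma~\ref{lemma:eqality} to make the dependence of $t(j) = K m(j)$ on $r(j)$ explicit. Rearranging the second identity in~\eqref{eq:optimalms} gives
\begin{equation}
m(j) \;=\; 1 - r(j)\,\frac{S-M}{\sum_{j' \in \CS} r(j')} \; ,
\end{equation}
so $m(j)$, and hence $t(j)$, is strictly decreasing in $r(j)$. Thus the states with the poorest channel conditions are precisely the ones that the LP in~\eqref{cache-allocation} equips with the largest cache share, and therefore with the largest subpacketisation parameter $t(j)$.

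Combining the two observations yields the corollary: because $\hat{t}$ is the minimum of $\{t(s_i)\}_{i \in \CK}$, it is pinned down by whichever user currently sits in the best-conditioned state. A realisation in which all users lie in low-rate states forces every $t(s_i)$ (and hence their minimum) to be large, so by the first part $T_m$ shrinks accordingly. I would phrase this explicitly as a sufficient-condition example matching the ``for example'' in the statement, rather than as a characterisation.

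The main obstacle is essentially absent, since both ingredients are immediate from Lemma~\ref{lemma:eqality} and Theorem~\ref{theorem:proposed transmission time}. The only subtlety worth flagging in the write-up is that the implication is one-sided: a single user drifting into a high-rate state is enough to collapse $\hat{t}$ back to a small value, so the second sentence should be presented as a typical sufficient scenario and not as an ``if and only if'' criterion.
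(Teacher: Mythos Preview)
Your proposal is correct and matches the paper's intended reasoning: the paper states this corollary without proof, placing it immediately after Theorem~\ref{thm:time_gain} (whose gain factor $\hat t+1$ makes the first sentence obvious), and your use of Lemma~\ref{lemma:eqality} to exhibit $m(j)$ as a decreasing function of $r(j)$ is exactly the natural justification for the second sentence. Nothing further is needed.
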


\begin{thm}
Under the assumption of uncoded cache placement, the proposed delivery time in~\eqref{eq:proposed transmission time} is within a bounded gap from any optimal scheme and satisfies
\begin{equation} \label{bound:performance_gap_mn}
    \frac{T_m}{T^{*}} \le \frac{\bar{r}}{(\hat{m}+\frac{1}{K}) \sum_{j \in \CS}r(j)}\left(\frac{S}{K}+M\right) \; ,
\end{equation}
where $\bar{r} = \max_{j \in \CS} r(j)$, $\hat{m} = \min_{j \in \CS} m(j)$, and $T^*$ is the optimal delivery time with uncoded cache placement. 
\end{thm}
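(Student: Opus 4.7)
The plan is to combine the formula for $T_m$ from Theorem~\ref{theorem:proposed transmission time} with a Maddah-Ali--Niesen/Yu--Maddah-Ali--Avestimehr (YMA) style converse on $T^*$, and then simplify. Viewing the $S$ location-dependent files as a library of size $S$ served to $K$ users each with cache $M$, the standard YMA converse for uncoded placement yields $R^*_u \ge K(S-M)/(S+KM)$, where $R^*_u$ denotes the minimum normalized volume of data the server must broadcast under the worst-case demand. In the wireless setting here, any transmission of duration $T^*$ conveys at most $T^*\bar{r}$ normalized units, since even the best-situated user (rate~$\bar{r}$) can decode no more than that from the NCM codeword; hence $T^* \ge R^*_u/\bar{r} \ge K(S-M)/(\bar{r}(S+KM))$.

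Next, I would upper bound $T_m$ in terms of $\hat{m}$. Theorem~\ref{theorem:proposed transmission time} gives $T_m = K(S-M)/((\hat{t}+1)\sum_{j\in\CS}r(j))$. Since $t(j)=Km(j)$ and $\hat{m}=\min_{j\in\CS} m(j)$, we always have $\hat{t}=\min_i t(s_i)\ge K\hat{m}$, so $T_m \le K(S-M)/((K\hat{m}+1)\sum_{j\in\CS}r(j))$. Dividing the upper bound on $T_m$ by the lower bound on $T^*$, the common factor $K(S-M)$ cancels, leaving $T_m/T^* \le \bar{r}(S+KM)/((K\hat{m}+1)\sum_{j\in\CS}r(j))$. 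Pulling a factor of $K$ out of both $(S+KM)$ and $(K\hat{m}+1)$ yields exactly the claimed bound $\bar{r}(S/K+M)/((\hat{m}+1/K)\sum_{j\in\CS}r(j))$.

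The hard part is the first step: there is no off-the-shelf converse tailored to this heterogeneous-rate, location-dependent setup. The delicate point is translating the information-theoretic rate lower bound into a time lower bound; dividing by $\bar{r}$ is the loosest but always-valid choice, because the NCM codeword's total information content is capped by what the best-situated receiver can decode in time $T^*$. This is what allows the classical rate-domain YMA converse to transfer cleanly to the time-domain setting of this paper, without having to redo the Han-inequality machinery inside a heterogeneous-channel argument.
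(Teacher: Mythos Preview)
Your proposal is correct and follows essentially the same route as the paper: lower-bound $T^*$ via the YMA uncoded-placement converse $K(S-M)/(S+KM)$ divided by $\bar{r}$, upper-bound $T_m$ by replacing $\hat{t}$ with $K\hat{m}$ (since $\hat{t}\ge K\hat{m}$), and simplify. The only piece you omit is the paper's brief check that the right-hand side of~\eqref{bound:performance_gap_mn} stays bounded as $K\to\infty$ and as $S\to\infty$, which is what justifies the ``bounded gap'' wording in the theorem statement.
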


\begin{proof}
For proof, we use optimality results provided for the original scheme of~\cite{maddah2014fundamental}. Note that if $r(j)$ is the same for every state $j \in \CS$, our scheme becomes equivalent to~\cite{maddah2014fundamental}, with library size $S$ and coded caching gain $\frac{KM}{S}$. Moreover, in~\cite{yu2017exact}, it is shown that under the assumption of uncoded cache placement, the original coded caching scheme in~\cite{maddah2014fundamental} is exactly optimal. So, we can provide an upper bound for the optimal delivery time in our setup by simply assuming that the achievable rate at every state is the same and equal to $\bar{r}$. Using the delivery time expressions in~\cite{maddah2014fundamental}, we can write
\begin{equation}
    \frac{K(S-M)}{S+KM} \frac{1}{\bar{r}} \le T^* \; ,
\end{equation}
which, together with~\eqref{eq:proposed transmission time}, results in
\begin{equation}
\label{eq:bound_int_relation}
    \frac{T_m}{T^*} \le \frac{K(S-M)}{(\hat{t}+1)\sum_{j \in \CS} r(j)} \frac{S+KM}{K(S-M)} \bar{r} \; .
\end{equation}
However, as we have
\begin{equation}
    \hat{t} = \min_{i \in \CK} t(s_i) \ge \min_{j \in \CS} t(j) = K \min_{j \in \CS} m(j) = K \hat{m} \; ,
\end{equation}
we can simply substitute $\hat{t}$ with $K \hat{m}$ in~\eqref{eq:bound_int_relation}, to reach the upper bound proposed in~\eqref{bound:performance_gap_mn}. Note that the right hand side expression in~\eqref{bound:performance_gap_mn} approaches to
\begin{equation}
    \frac{\bar{r}M}{\hat{m}\sum_{j \in \CS} r(j)}
\end{equation}
as $K \rightarrow \infty$, and is bounded by 
\begin{equation}
    \frac{\bar{r}}{K(\hat{m}+\frac{1}{K})\min_{j\in\CS}r(j)}
\end{equation}
as $S \rightarrow \infty$. Hence, in both cases, the proposed delivery time in our scheme is within a bounded gap from any optimal scheme, and the proof is complete.

\end{proof}

It can be shown that, if all $t(j)$ values are not integers, all the results provided in this section still hold. However, if $m(j)=0$ for some $j \in \CS$, the reduction factor in delivery time, proposed in Theorem~\ref{thm:time_gain}, is no longer valid. This is because, in such a case, it is not always possible to deliver all the required data through multicasting (some missing parts should be delivered by unicasting). 



\section{Simulation Results}
\label{section:sim_results}
\label{sec:simulation}
We use numerical simulations to evaluate the performance of the proposed location-dependent scheme. For simulations, we consider a $5m \times 5m$ square room where the transmitter is located in the middle of the room at a height of $3m$. The transmitter power is $30\mathrm{dB}$, the room area is split into $S=121$ states, and the users are uniformly distributed over the states. We compare the delivery time of the proposed scheme with unicasting and the baseline scheme of~\cite{tolli2017multi}. These delivery times are denoted by $T_m$, $T_u$ and $T_x$, respectively. For the scheme of~\cite{tolli2017multi}, $m(j) = \frac{M}{S}$ for every state $j \in \CS$, and the multicast rate for a user set $\CU$ is limited by $\min_{i \in \CU}r(s_i)$.



In Figure~\ref{fig:timegain_Kx}, we have compared the proposed location-dependent scheme with unicast transmission, for different values of the user count parameter $K$. Clearly, the new scheme not only reduces the delivery time, but also makes it (almost) independent of $K$. This independence is justified by Theorem~\ref{theorem:proposed transmission time}, which states that for integer $\hat{t}$, $T_m$ is proportional to $\frac{K}{\hat{t}+1}$. However, as $\hat{t}=\min_{u_i\in\CK} t_i$ and $t_i = K m(s_i)$, as $K$ becomes larger, the value of $T_m$ approaches a constant value.



In Figure~\ref{fig:time_Mxs121}, the same comparison is done for different values of the cache size parameter $M$. It can be seen that the relative transmission gain $\frac{T_u}{T_m}$ increases almost linearly when $M > 0.4$. This is because for $M < 0.4$, there exists at least one state $j\in\CS$ for which $t(j) < 1$, and hence, the efficiency of the proposed scheme is reduced as multicasting is not possible when some users are located in such states. On the other hand, it can be seen that the proposed scheme performs worse than the baseline scheme of~\cite{tolli2017multi} when $M$ is small.
This is because our proposed scheme sacrifices the global caching gain for a higher local caching gain. While this may not seem appealing, it results in more resilient performance if the ratio between the best and worst channel conditions is large. This is investigated in Figure~\ref{fig:timegain_Mx441}, where we have repeated the simulations for a larger room area of $10m \times 10m$, split into $S=441$ states. Clearly, in this case, the proposed location-dependent scheme outperforms the baseline scheme of~\cite{tolli2017multi} with a good margin, confirming its resilience to the existence of ill-conditioned states. 

\section{Conclusions and Future Work}
\label{section:conclusion}
In this paper, we proposed a centralized, location-dependent coded caching scheme, tailored for future immersive viewing applications. For the placement phase, we use a memory allocation process to allocate larger cache portions where the channel condition is poor, and during the delivery phase, we use nested code modulation to support different data rates within a single multicast transmission. The resulting scheme provides an (almost) uniform QoE throughout the application environment, and performs better than the state of the art in ill-conditioned scenarios where the ratio between the best and worst channel conditions is large.


The proposed location-dependent scheme can be extended in various directions. An extension to support multi-antenna communication setups is currently in progress. Other notable research opportunities include supporting multiple transmitters, incorporating side information on user movement patterns and state transition probabilities, and considering a more dynamic scenario where the users' cache content is updated as they move through the application environment.



\bibliographystyle{IEEEtran}
\bibliography{references}

\begin{thebibliography}{10}
\providecommand{\url}[1]{#1}
\csname url@samestyle\endcsname
\providecommand{\newblock}{\relax}
\providecommand{\bibinfo}[2]{#2}
\providecommand{\BIBentrySTDinterwordspacing}{\spaceskip=0pt\relax}
\providecommand{\BIBentryALTinterwordstretchfactor}{4}
\providecommand{\BIBentryALTinterwordspacing}{\spaceskip=\fontdimen2\font plus
\BIBentryALTinterwordstretchfactor\fontdimen3\font minus
  \fontdimen4\font\relax}
\providecommand{\BIBforeignlanguage}[2]{{%
\expandafter\ifx\csname l@#1\endcsname\relax
\typeout{** WARNING: IEEEtran.bst: No hyphenation pattern has been}%
\typeout{** loaded for the language `#1'. Using the pattern for}%
\typeout{** the default language instead.}%
\else
\language=\csname l@#1\endcsname
\fi
#2}}
\providecommand{\BIBdecl}{\relax}
\BIBdecl

\bibitem{han2017problem}
L.~Han, S.~Appleby, and K.~Smith, ``{Problem statement: Transport support for
  augmented and virtual reality applications},'' \emph{Working Draft, IETF
  Secretariat, Internet-Draft draft-haniccrg-arvr-transport-problem-00, March},
  2017.

\bibitem{bastug2017toward}
E.~Bastug, M.~Bennis, M.~M{\'{e}}dard, and M.~Debbah, ``{Toward interconnected
  virtual reality: Opportunities, challenges, and enablers},'' \emph{IEEE
  Communications Magazine}, vol.~55, no.~6, pp. 110--117, 2017.

\bibitem{bastug2014living}
E.~Bastug, M.~Bennis, and M.~Debbah, ``{Living on the edge: The role of
  proactive caching in 5G wireless networks},'' \emph{IEEE Communications
  Magazine}, vol.~52, no.~8, pp. 82--89, 2014.

\bibitem{sun2019communications}
Y.~Sun, Z.~Chen, M.~Tao, and H.~Liu, ``{Communications, caching, and computing
  for mobile virtual reality: Modeling and tradeoff},'' \emph{IEEE Transactions
  on Communications}, vol.~67, no.~11, pp. 7573--7586, 2019.

\bibitem{maddah2014fundamental}
M.~A. Maddah-Ali and U.~Niesen, ``{Fundamental limits of caching},'' \emph{IEEE
  Transactions on Information Theory}, vol.~60, no.~5, pp. 2856--2867, 2014.

\bibitem{shariatpanahi2016multi}
S.~P. Shariatpanahi, S.~A. Motahari, and B.~H. Khalaj, ``{Multi-server coded
  caching},'' \emph{IEEE Transactions on Information Theory}, vol.~62, no.~12,
  pp. 7253--7271, 2016.

\bibitem{shariatpanahi2018multi}
\BIBentryALTinterwordspacing
S.~P. Shariatpanahi and B.~H. Khalaj, ``{On Multi-Server Coded Caching in the
  Low Memory Regime},'' \emph{arXiv preprint arXiv:1803.07655}, pp. 1--12,
  2018. [Online]. Available: \url{https://arxiv.org/pdf/1803.07655.pdf}
\BIBentrySTDinterwordspacing

\bibitem{tolli2017multi}
A.~Tolli, S.~P. Shariatpanahi, J.~Kaleva, and B.~H. Khalaj, ``{Multi-antenna
  interference management for coded caching},'' \emph{IEEE Transactions on
  Wireless Communications}, vol.~19, no.~3, pp. 2091--2106, 2020.

\bibitem{mahmoodi2020d2d}
H.~B. Mahmoodi, J.~Kaleva, S.~P. Shariatpanahi, and A.~Tolli, ``{D2D Assisted
  Multi-antenna Coded Caching},'' \emph{arXiv preprint arXiv:2010.05459}, 2020.

\bibitem{lampiris2018adding}
E.~Lampiris and P.~Elia, ``{Adding transmitters dramatically boosts
  coded-caching gains for finite file sizes},'' \emph{IEEE Journal on Selected
  Areas in Communications}, vol.~36, no.~6, pp. 1176--1188, 2018.

\bibitem{salehi2020lowcomplexity}
M.~Salehi, E.~Parrinello, S.~P. Shariatpanahi, P.~Elia, and A.~T{\"{o}}lli,
  ``{Low-Complexity High-Performance Cyclic Caching for Large MISO Systems},''
  \emph{arXiv preprint arXiv:2009.12231}, 2020.

\bibitem{salehi2019subpacketization}
M.~Salehi, A.~Tolli, S.~P. Shariatpanahi, and J.~Kaleva,
  ``{Subpacketization-rate trade-off in multi-antenna coded caching},'' in
  \emph{2019 IEEE Global Communications Conference, GLOBECOM 2019 -
  Proceedings}.\hskip 1em plus 0.5em minus 0.4em\relax IEEE, 2019, pp. 1--6.

\bibitem{destounis2020adaptive}
A.~Destounis, A.~Ghorbel, G.~S. Paschos, and M.~Kobayashi, ``{Adaptive Coded
  Caching for Fair Delivery over Fading Channels},'' \emph{IEEE Transactions on
  Information Theory}, 2020.

\bibitem{salehi2020coded}
M.~Salehi, A.~Tolli, and S.~P. Shariatpanahi, ``{Coded Caching with Uneven
  Channels: A Quality of Experience Approach},'' in \emph{2020 IEEE
  International Workshop on Signal Processing Advances in Wireless
  Communications (SPAWC)}, 2020, pp. 1--5.

\bibitem{tang2017coded}
A.~Tang, S.~Roy, and X.~Wang, ``{Coded caching for wireless backhaul networks
  with unequal link rates},'' \emph{IEEE Transactions on Communications},
  vol.~66, no.~1, pp. 1--13, 2017.

\bibitem{chen2010novel}
Z.~Chen, H.~Liu, and W.~Wang, ``{A novel decoding-and-forward scheme with joint
  modulation for two-way relay channel},'' \emph{IEEE Communications Letters},
  vol.~14, no.~12, pp. 1149--1151, 2010.

\bibitem{tang2011full}
S.~Tang, H.~Yomo, T.~Ueda, R.~Miura, and S.~Obana, ``{Full rate network coding
  via nesting modulation constellations},'' \emph{EURASIP Journal on Wireless
  Communications and Networking}, vol. 2011, no.~1, p. 780632, 2011.

\bibitem{yu2017exact}
Q.~Yu, M.~A. Maddah-Ali, and A.~S. Avestimehr, ``{The exact rate-memory
  tradeoff for caching with uncoded prefetching},'' \emph{IEEE International
  Symposium on Information Theory - Proceedings}, vol.~64, no.~2, pp.
  1613--1617, 2017.

\end{thebibliography}

\end{document}